\newcommand{\CPn}{{\mathbb{C}P^{n-1}}}
\newcommand{\Hil}{{\mathcal{H}}}
\DeclareMathOperator{\tr}{tr}
\DeclareMathOperator{\imag}{i}
\DeclareMathOperator{\e}{e}
\DeclareMathOperator{\der}{d}
\DeclareMathOperator{\Real}{Re}
\DeclareMathOperator{\Imag}{Im}
\def\<{\langle}
\def\>{\rangle}
\newtheorem*{lemat*}{Lemma}
\def\oper{{\mathchoice{\rm 1\mskip-4mu l}{\rm 1\mskip-4mu l}
{\rm 1\mskip-4.5mu l}{\rm 1\mskip-5mu l}}}
\title{\bf Non-Markovianity of geometrical qudit decoherence}
\author{ Katarzyna Siudzi{\'n}ska\\ {\it Institute of Physics, Faculty of Physics, Astronomy and Informatics}\\ {\it Nicolaus Copernicus University} \\ {\it ul. Grudzi\k{a}dzka 5/7, 87-100 Toru{\'n}, Poland}\\
kasias@doktorant.umk.pl}
\begin{document}

\maketitle

\begin{abstract}
In the following paper, we generalize the geometrical framework of qubit decoherence to higher dimensions. The quantum mixed state is represented by the probability distribution, which is the K\"ahler function on the projective Hilbert space. The Markovian master equation for density operators turns out to be equivalent to the Fokker-Planck equation for quantum probability distributions. Several examples are analyzed, featuring different generalizations of the Pauli channel.
\end{abstract}

%\noindent
%{\bf Keywords:} foundations of quantum mechanics, geometrical quantum mechanics, mathematical physics, master equation, quantum channels.
%
%\noindent
%{\bf MSC codes:} 81Q35, 53Z05.
%
%\noindent
%{\bf Subject Classification:} real and complex differential geometry, symplectic geometry.

\section{Introduction}

%The author aims to
Geometrization of quantum mechanics has been given a lot of attention lately. While the description of the Hamiltonian dynamics is well developed for both pure and mixed states, the dynamics of open quantum system has not been given enough notice. Therefore, our goal is to apply the geometrical structures of quantum mechanics to the problem of non-Markovian evolution and decoherence processes of open quantum systems.

First, let us recall some basic concepts of the theory of open quantum systems. It is well known that every quantum evolution can be described by a completely positive, trace-preserving map $\Lambda(t):\mathcal{B}(\mathcal{H})\to\mathcal{B}(\mathcal{H})$ with $\Lambda(0)=\oper$. In other words, the mapping takes the initial state $\rho(0)$ into the evolved state at time $t$, i.e. $\Lambda(t)[\rho(0)]=\rho(t)$. The Markovianity property of a dynamical map is determined by its divisibility \cite{RHP, Wolf-Isert}. Namely, the evolution is Markovian iff $\Lambda(t)$ is CP-divisible -- that is, iff it can be written in the following form,
\begin{equation}
  \Lambda(t) = V(t,s) \Lambda(s),
\end{equation}
where $V(t,s)$ is completely positive and trace-preserving (CPT) for all $t\geq s$. This property translates into the very specific form of the time-local generator $\mathcal{L}(t)$ that appears in the master equation,
\begin{equation}\label{master_lambda}
  \frac{\der}{\der t} \Lambda(t)[\rho] =\mathcal{L}(t) \Lambda(t)[\rho].
\end{equation}
This form is called the time-dependent Gorini-Kossakowski-Sudarshan-Lindblad (GKSL) form \cite{GKS, L} and it reads,
\begin{equation}\label{generator_general}
\mathcal{L}(t)[\rho] =-i[H(t),\rho] + \frac 12 \sum_{k=0}^{n^2-1} \gamma_k(t)\left( V_k(t)\rho V_k^\dagger(t) -\frac 12 [V_k(t)V_k^\dagger(t),\rho]_+\right) ,
\end{equation}
where $H(t)$ is the Hamiltonian, $V_k$ -- the noise operators, and the decoherence rates $\gamma_k(t)\geq 0$.
Iff $V(t,s)$ is positive (but not necessarily CPT) and invertible, then $\Lambda(t)$ satisfies the weaker condition for Markovianity proposed in \cite{PRL-Sabrina},
\begin{equation}\label{P-div}
  \frac{\der}{\der t} ||\Lambda(t)[X]||_1 \leq 0
\end{equation}
for every Hermitian $X$, with $||X||_1:=\tr\sqrt{X^\dagger X}$ being the trace norm.

In Section 2, we introduce the geometrical language to describe quantum mechanics on the K\"{a}hler manifolds. Section 3 lists the most important results for general qudit decoherence. Sections 4-7 deal with four different generalizations of the Pauli channel to higher dimensions. There, we analyze the properties of several CPT maps (which, in general, correspond to different dynamics) and their time-dependent generators. In Section 8, we examine the conditions for divisibility in the geometric approach. Final conclusions are gathered in Section 9.

\section{Geometrical formulation of quantum mechanics}

The geometrical formulation of quantum mechanics recognizes the projective Hilbert space $\mathbb{P}\Hil$ as the space of quantum states  \cite{Kibble,Cirelli,Cantoni,Anandan} (for recent reviews, see \cite{Brody, Ashtekar,Moretti}). For every point in $\mathbb{P}\Hil$ there exists the corresponding rank-1 projector $|\psi\>\<\psi|$ and a ray in the Hilbert space $\Hil$ passing through $\psi$ (see also \cite{Jamiolkowski, Bengtsson, MARMO, Heydari}). If we choose $\Hil = \mathbb{C}^n$, then the space of states
\begin{equation}
\mathbb{P}\mathcal{H} = \CPn  = \mathrm{U}(n)/\mathrm{U}(n-1)
\end{equation}
is the $(n-1)$-dimensional complex space equipped with the Fubini-Study metric $g$ and the symplectic form $\omega$ such that the K\"ahler form
\begin{equation}
\mathcal{K} = g + \imag\omega.
\end{equation}
The triple $(\CPn, g, \omega)$ is the K\"ahler manifold \cite{Heydari-JPA}. On this space, one can define the K\"ahler functions \cite{GNS} in the following way. 
A function $f : \CPn \to \mathbb{C}$ is K\"ahlerian if and only if its Hamiltonian vector field $X_f$, which is given by the equation $\der f = \omega(X_f,\, \cdot\, )$,
is a Killing vector field -- that is, if and only if the Lie derivative $\mathfrak{L}_{X_f} g =0$. These functions form a linear subspace in the space of all functions
$\mathcal{F}(\CPn) :=\{ f : \CPn \to \mathbb{C} \}$. 

With every operator $A \in \mathcal{B}(\mathcal{H})$, one can associate the function $f_A : \CPn \to \mathbb{C}$ given by
\begin{equation}\label{fF}
  f_A([\psi]) := \frac{\<\psi|A|\psi\>}{\<\psi|\psi\>} .
\end{equation}
The function $f_A$ is simply the expectation value of the corresponding operator $A$ in a normalized state, and therefore we call it the expectation value function.
Note that if we take the operators $A,B \in \mathcal{B}(\mathcal{H})$, $C=-2\imag[A,B]$, and $D=2[A,B]_+:=2(AB+BA)$, then the corresponding K\"ahler functions $f_A,f_B,f_C,f_D$ are connected to each other by
\begin{equation}
  f_C = \{f_A,f_B\} = -X_{f_A}(f_B),\qquad f_D = \{f_A,f_B\}_+.
\end{equation}
Here, $\{f_A,f_B\} := \omega(X_{f_A},X_{f_B})$ is the Poisson bracket on $\CPn$ defined by the symplectic form $\omega$, and $\{f_A,f_B\}_+ := g(X_{f_A},X_{f_B})+4f_Af_B$ is the symmetric bracket \cite{Ashtekar} given by the Fubini-Study metric $g$.

Let us introduce $(\psi_0,\psi_1,\dots,\psi_{n-1})\in\mathbb{C}^n$ in terms of the octant coordinate system,
\begin{equation}
(\psi_0,\ \psi_1,\ \dots\ ,\ \psi_{n-1})=(N_0,\ N_1\e^{\imag\nu_1},\ \dots\ ,\ N_{n-1}\e^{\imag\nu_{n-1}}),
\end{equation}
where $0\leq\nu_i\leq 2\pi$ and $\sum_{i=0}^{n-1}N_i=1$. In local coordinates, we set
\begin{equation}
\begin{cases}
N_0=\cos\theta_1\sin\theta_2\sin\theta_3\ \dots\ \sin\theta_{n-1},\\
N_1=\sin\theta_1\sin\theta_2\sin\theta_3\ \dots\ \sin\theta_{n-1},\\
N_2=\cos\theta_2\sin\theta_3\ \dots\ \sin\theta_{n-1},\\
\vdots\\
N_{n-1}=\cos\theta_{n-1}
\end{cases}
\end{equation}
with $0\leq\theta_i\leq\pi/2$. Now, the Fubini-Study metric $g$ and the symplectic form $\omega$ read
\begin{equation}\label{omega}
\begin{split}
g&=\der N_0^2+\sum_{i=1}^{n-1}\Big[\der N_i^2+N_i^2(1-N_i^2)\der\nu_i^2-2\sum_{j=i+1}^{n-1}N_i^2N_j^2\der\nu_i\der\nu_j\Big],\\
\omega&=\sum_{i=1}^{n-1}N_i\der N_i\wedge\der\nu_i.
\end{split}
\end{equation}
The symplectic form defines a volume element on $\CPn$,
\begin{equation}\label{}
  {\rm Vol}(\CPn) = \int_{\CPn} \omega=\frac{\pi^{n-1}}{(n-1)!}.
\end{equation}
Observe that from the geometrical point of view $N_k$'s form the positive hyperoctant of an $(n-1)$-sphere, whereas the phases $\nu_k$'s form an $(n-1)$-torus \cite{Bengtsson}.

Now, for a given density operator $\rho$, let us introduce the following function,
\begin{equation}\label{}
p([\psi]):=\frac{(n-1)!}{\pi^{n-1}}\frac{\<\psi|\rho|\psi\>}{\<\psi|\psi\>}.
\end{equation}
From definition, $p([\psi])$ is a probability distribution on $\CPn$, which means that $p([\psi]) \geq 0$ and
\begin{equation}
  \int_{\CPn} p([\psi]) \omega = 1 \ ,
\end{equation}
where $\omega$ is given by (\ref{omega}). This function corresponds to a legitimate density operator iff $p$ is a K\"ahler function. It is worth noting that $p$ describes a pure state $\rho=|\psi\>\<\psi|$ iff $p([\psi])={(n-1)!}/{\pi^{n-1}}$.

\section{General qudit evolution}

From now on, we will be interested only in the non-Hamiltonian evolution of a quantum system, therefore limiting our discussion to the following family of generators,
\begin{equation}\label{Lindblad}
\mathcal{L}(t)[\rho]=\frac 12 \sum_{k=0}^{n^2-1}\gamma_k(t)\left(V_k(t)\rho V_k^\dagger(t)
-\frac 12 [V_k^\dagger V_k,\rho]_+\right).
\end{equation}
The above formula can be rewritten with the use of double bracket structures,
\begin{equation}
\begin{split}
\mathcal{L}(t)[\rho]=-\frac 18 \sum_{k=1}^{n^2-1}\gamma_k(t)\Bigg(&[V_k^\dagger(t),[V_k(t),\rho]] +[V_k(t),[V_k^\dagger(t),\rho]]\\&+ [V_k^\dagger(t),[V_k(t),\rho]_+] -[V_k(t),[V_k^\dagger(t),\rho]_+]\Bigg).
\end{split}
\end{equation}
In the geometrical language, this is equivalent to the following equation for the probability distribution,
\begin{equation}\label{FP}
l(t)[p]=\frac{1}{16}\sum_{k=1}^{n^2-1}\gamma_k(t)\Bigg(|X_{v_k(t)}|^2p +\Imag X_{v_k(t)}\{v_k^*(t),p\}_+ \Bigg),
\end{equation}
where the master equation (\ref{master_lambda}) translates into
\begin{equation}\label{ME_geom}
\frac{\der}{\der t}p(t)=l(t)[p(t)].
\end{equation}
Here, $p(t)=\<\rho(t)\>$, $v_k(t)=\<V_k(t)\>$, where $\<A\>$ denotes the expectation value of the operator $A$, and $v_k^*(t)$ is the complex conjugation of $v_k(t)$. The above formula is clearly the Fokker-Planck equation \cite{Risken} for the probability distributions of quantum states. Eq.~(\ref{FP}) can be naturally divided into the quantum-classical (QC) part,
\begin{equation}\label{QC}
l^{\mathrm{QC}}(t)[p]=\frac{1}{16}\sum_{k=1}^{n^2-1}\gamma_k(t)|X_{v_k(t)}|^2p,
\end{equation}
and the purely quantum (PQ) part,
\begin{equation}\label{PQ}
l^{\mathrm{PQ}}(t)[p]=\frac{1}{16}\sum_{k=1}^{n^2-1}\gamma_k(t)\Imag X_{v_k(t)}\{v_k^*(t),p\}_+.
\end{equation}
This distinction will become more clear if we observe that the Poisson bracket is the only bracket that transforms $k$-poles into $k$-poles, where by the $k$-pole one understands the harmonic function of degree $k$ in the multipole expansion. For $n=2$, the multipole functions are simply the spherical harmonics. Therefore, replacing the quantum distribution $p(t)$, composed of a monopole and $n-1$ dipoles, with a classical one, whose expansion consists of all possible multipoles, will result in a valid master equation only for $l^{PQ}(t)[p]=0$. Therefore, whenever the r.h.s. of any geometrical master equation has at least one symmetric bracket in it, the equation describes quantum dynamics. It turns out that one can choose the noise operators in such a way that the purely quantum part (\ref{PQ}) vanishes, and the classical-quantum part (\ref{QC}) does not vanish for non-zero noise operators.

Quantum evolution can be described with the use of the CPT map $\Lambda_t$. Let us assume that this map has the following Kraus decomposition,
\begin{equation}\label{Kraus}
\Lambda(t)[\rho(0)]=\sum_{k=0}^{n^2-1}\pi_k(t)A_k(t)\rho(0)A_k^\dagger(t),
\end{equation}
where $p_k\geq 0$, $\sum_{k=0}^{n^2-1}\pi_k(t)A_k^\dagger(t)A_k(t)=\oper$, $\pi_0(0)=1$.
It is possible to rewrite equation (\ref{Kraus}) in the form which is very similar to the r.h.s. of the GKSL form; namely,
\begin{equation}
(\Lambda(t)-\Lambda(0))[\rho(0)]=\sum_{k=0}^{n^2-1}\pi_k(t)\left(A_k(t)\rho(0)A_k^\dagger(t)-\frac 12 [A_k^\dagger(t) A_k(t),\rho(0)]_+\right).
\end{equation}
In the geometrical formulation, this corresponds to
\begin{equation}
p(t)-p(0)=\frac{1}{8}\sum_{k=1}^{n^2-1}\pi_k(t)\Bigg(|X_{a_k(t)}|^2p(0) +\Imag X_{a_k(t)}\{a_k^*(t),p(0)\}_+\Bigg)
\end{equation}
with $p(t)=\<\rho(t)\>$ and $a_k(t)=\<A_k(t)\>$.

In the previous work \cite{moje}, the authors analyzed the properties of the random unitary qubit evolution. Now, we would like to generalize this picture to higher dimensions. However, it turns out that there are at least four natural generalizations of the Pauli matrices:

\begin{enumerate}
\item {\it the Gell-Mann matrices}, which are Hermitian but non-unitary;
\item {\it the Weyl operators}, which are unitary but non-Hermitian;
\item {\it the tensor products of the Pauli matrices}, which are Hermitian and unitary, but they are applicable only for $n=2^r$;
\item {\it the projectors on mutually unbiased bases}, which are again both Hermitian and unitary, but they are applicable only for $n=s^r$ with prime $s$.
\end{enumerate}

The following sections are dedicated to the analysis of time-local generators where the Pauli matrices were replaced with the generalized operators from one of the abovementioned sets.

\section{Gell-Mann matrices}

When choosing the generalization in which the Hermiticity of the noise operators is preserved, the formula (\ref{Lindblad}) can be rewritten with the use of a double commutator structure,
\begin{equation}\label{L_GM}
\mathcal{L}(t)[\rho]=-\frac 14\sum_{k_1,k_2=0}^{n-1}\gamma_{k_1k_2}(t)[\tau_{k_1k_2},[\tau_{k_1k_2},\rho]],
\end{equation}
where $\tau_{k_1k_2}$ are the Gell-Mann matrices defined as follows \cite{qudits}:
\begin{equation}
\begin{aligned}
&\tau_{k_1k_2}^S=E_{k_1k_2}+E_{k_2k_1},&\qquad 0\leq k_1<k_2\leq n-1,\\
&\tau_{k_1k_2}^A=-\imag(E_{k_1k_2}-E_{k_2k_1}),&\qquad 0\leq k_1<k_2\leq n-1,\\
&\tau_{k_1k_1}^D=\sqrt{\frac{2}{k_1(k_1+1)}}\left(\sum_{j=0}^{k_1-1}E_{jj}-k_1E_{k_1k_1}\right),&\qquad 1\leq k_1\leq n-1,\\
&\tau_{00}^D=\sum_{j=0}^{n-1}E_{jj},&
\end{aligned}
\end{equation}
with $E_{ij}$ being the $n\times n$ matrix with 1 as its $(i,j)$'th entry and 0 elsewhere.
In the language of quantum probability distributions, the master equation (\ref{L_GM}) reads
\begin{equation}
l(t)[p(t)]=\frac{1}{16}\sum_{k_1,k_2=0}^{n-1}\gamma_{k_1k_2}(t)\{f_{k_1k_2},\{f_{k_1k_2},p(t)\}\} =\frac{1}{4} \Delta_{\gamma(t)} p(t),
\end{equation}
with $f_{k_1k_2}=\<\tau_{k_1k_2}\>$ and 
\begin{equation}
\Delta_{\gamma(t)}:=\frac{1}{4}\sum_{k_1,k_2=0}^{n-1}\gamma_{k_1k_2}(t)X_{f_{k_1k_2}}^2
\end{equation}
being the generalized Laplacian on $\mathbb{C}P^{n-1}$.

In general, the eigenfunctions of $l(t)$ are time-dependent. This makes finding the eigenvalue formulas highly problematic. However, if we narrow down our interest to the case in which $\gamma_{k_1k_2}^S(t)+\gamma_{k_1k_2}^A(t)=:\gamma(t)$, then it turns out that $f_{k_1k_2}$'s are the desired eigenfunctions, and therefore
\begin{equation}
l(t)[f_{k_1k_2}]=l_{k_1k_2}(t)f_{k_1k_2}.
\end{equation}
The corresponding eigenvalues are listed below:
\begin{equation}\label{eigenbasis_GM_special}
\begin{aligned}
l_{k_1k_2}^{S/A}=&-\gamma_{k_1k_2}^{A/S}-\frac{k_1}{2(k_1+1)}\gamma_{k_1k_1}^D -\sum_{j=k_1+1}^{k_2-1}\frac{\gamma_{jj}^D}{2j(j+1)}\\&-\frac{k_2+1}{2k_2}\gamma_{k_2k_2}^D-\frac{\gamma}{2}(n-2),
\qquad\qquad\qquad 0\leq k_1<k_2\leq n-1,\\
l_{k_1k_1}^D=&-\frac{\gamma}{2}n,\qquad\qquad\qquad\qquad\qquad\qquad\qquad\quad  1\leq k_1\leq n-1,
\end{aligned}
\end{equation}
and $l_{00}^D=0$.

In the isotropic case, when $\gamma_{k_1k_2}(t)=:\gamma(t)$, equation (\ref{L_GM}) simplifies to
\begin{equation}\label{lap}
\dot{p}(t)=\frac{\gamma}{16}\sum_{k_1,k_2=0}^{n-1}X_{f_{k_1k_2}}^2p(t) =\frac{\gamma}{4}\Delta p(t),
\end{equation}
with $\Delta$ being the Laplacian. This corresponds to the dephasing channel for a qudit. Moreover, every isotropic evolution whose noise operators are given by a unitary rotation of the Gell-Mann matrices leads to the same dynamical equation for $p(t)$. To check the validity of (\ref{lap}), one needs the following lemma.

\begin{lemat*}
Every dipole K\"ahler function $f_{k_1k_2}$, i.e. $f_{k_1k_2}$ for $k_1\neq 0$ and $k_2\neq 0$, is an eigenfunction of the Laplace operator to the same eigenvalue,
\begin{equation}
\Delta f_{k_1k_2}=-4nf_{k_1k_2}.
\end{equation}
\end{lemat*}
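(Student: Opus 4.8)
The plan is to convert the geometric Laplacian acting on $f_{k_1 k_2}$ into a purely operator-algebraic double-commutator sum, and then to evaluate that sum with the completeness relation of the Gell-Mann basis. First I would translate Hamiltonian vector fields into commutators. From the dictionary stated above, $\{f_A, f_B\} = f_{-2\imag[A,B]} = -X_{f_A}(f_B)$, so that $X_{f_A}(f_B) = f_{2\imag[A,B]}$. Since $A$ and $B$ Hermitian forces $\imag[A,B]$ to be Hermitian, the right-hand side is again a bona fide expectation-value function, and the operation may be iterated: $X_{f_A}^2(f_B) = f_{2\imag[A,\, 2\imag[A,B]]} = -4\, f_{[A,[A,B]]}$.

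Using this identity together with the defining formula $\Delta = \tfrac14 \sum_{k_1,k_2} X_{f_{k_1 k_2}}^2$ and the linearity of the map $A \mapsto f_A$, the Laplacian of any expectation-value function collapses to $\Delta f_B = -f_{\mathcal C(B)}$, where $\mathcal C(B) := \sum_{k_1,k_2=0}^{n-1}[\tau_{k_1 k_2},[\tau_{k_1 k_2}, B]]$ is an operator built from a double commutator. The monopole term $\tau_{00}^D = \oper$ commutes with everything and drops out, so the sum effectively runs over the $n^2-1$ traceless Gell-Mann matrices. The key computation is then to show $\mathcal C(B) = 4nB - 4\tr(B)\oper$.

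To evaluate $\mathcal C(B)$ I would first check, by direct trace computation, that the traceless generators are orthonormal in the sense $\tr(\tau_a \tau_b) = 2\delta_{ab}$; indeed $(\tau^S_{k_1 k_2})^2 = (\tau^A_{k_1 k_2})^2 = E_{k_1 k_1} + E_{k_2 k_2}$, and the prefactor in $\tau^D_{k_1 k_1}$ is fixed precisely so that $\tr\big((\tau^D_{k_1 k_1})^2\big) = 2$. This normalization makes the standard completeness (Fierz) identity $\sum_a (\tau_a)_{ij}(\tau_a)_{kl} = 2\big(\delta_{il}\delta_{jk} - \tfrac1n \delta_{ij}\delta_{kl}\big)$ available, from which $\sum_a \tau_a B \tau_a = 2\tr(B)\oper - \tfrac2n B$ and $\sum_a \tau_a^2 = \tfrac{2(n^2-1)}{n}\oper$ follow by contracting indices. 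Expanding the double commutator as $\tau_a^2 B - 2\tau_a B \tau_a + B \tau_a^2$ and summing gives $\mathcal C(B) = 4nB - 4\tr(B)\oper$. Hence $\Delta f_B = -4n f_B + 4\tr(B)$, and for a dipole $B = \tau_{k_1 k_2}$ (any traceless Gell-Mann matrix, i.e.\ anything but the monopole $f_{00}$) the trace term vanishes, yielding $\Delta f_{k_1 k_2} = -4n f_{k_1 k_2}$.

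The main obstacle is the bookkeeping in the operator step: one must pin down the numerical constants and signs in the bracket-to-commutator dictionary exactly, verify that the particular normalization chosen for the $\tau$'s is the one for which the Fierz identity holds, and confirm that the identity element genuinely decouples so that the completeness relation may be applied to the remaining traceless basis. A quick sanity check for $n=2$, where the $\tau_a$ reduce to the Pauli matrices, gives $\sum_a[\sigma_a,[\sigma_a,\sigma_z]] = 8\sigma_z$ and thus $\Delta f_{\sigma_z} = -8 f_{\sigma_z} = -4n f_{\sigma_z}$, consistent with the claimed eigenvalue.
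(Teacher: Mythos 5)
Your proposal is correct, and at its core it does the same thing the paper does: reduce the geometric statement to the operator identity $\sum_a[\tau_a,[\tau_a,B]]=4nB$ for traceless $B$, then pass it through the commutator--bracket dictionary. The difference lies in how the two key steps are handled. For the dictionary, the paper simply asserts that $\dot{\rho}'=-n\rho'$ "is equivalent to" $\frac14\Delta p'=-np'$, whereas you derive it explicitly by iterating $X_{f_A}(f_B)=f_{2\imag[A,B]}$ to get $X_{f_A}^2(f_B)=-4f_{[A,[A,B]]}$ --- a worthwhile clarification, since that unproved equivalence is the only place the geometry actually enters. For the Casimir-type sum, the paper expands $\rho'$ in the Gell-Mann basis and invokes the structure-constant contraction $\sum_{i,j}\epsilon_{ijk}\epsilon_{ijl}=n\delta_{kl}$ (quoted from the literature), while you instead verify the normalization $\tr(\tau_a\tau_b)=2\delta_{ab}$ and apply the completeness (Fierz) relation to an arbitrary operator $B$. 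Your route is slightly more self-contained (the structure-constant identity is itself usually derived from completeness) and more informative: it yields $\Delta f_B=-4nf_B+4\tr(B)$ for \emph{every} $B$, making it transparent that only the monopole escapes the eigenvalue $-4n$, rather than restricting from the outset to traceless combinations. The $n=2$ sanity check is consistent with both computations.
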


\begin{proof}
Let us start from the isotropic ($\gamma_k=1$) algebraical master equation (\ref{L_GM}) for a traceless Hermitian operator $\rho^\prime$,
\begin{equation}
\dot{\rho^\prime}=-\frac 14\sum_{k=1}^{n^2-1}[\tau_k,[\tau_k,\rho^\prime]],
\end{equation}
where $\tau_k$ are the Gell-Mann matrices and
\begin{equation}
\rho^\prime=\sum_{k=1}^{n^2-1}x_k\tau_k.
\end{equation}
Using the properties of $\tau_k$ (see e.g. \cite{Bengtsson}),
\begin{equation}
[\tau_i,\tau_j]=2\imag\sum_{k=0}^{n^2-1}\epsilon_{ijk}\tau_k,
\end{equation}
\begin{equation}
\sum_{i,j=0}^{n^2-1}\epsilon_{ijk}\epsilon_{ijl}=n\delta_{kl},
\end{equation}
we arrive at
\begin{equation}
\dot{\rho^\prime}=-\frac 14\sum_{k,l=1}^{n^2-1}x_l[\tau_k,[\tau_k,\tau_l]]= -\sum_{i,j,k,l=1}^{n^2-1}x_l\tau_i\epsilon_{jkl}\epsilon_{jki}=-n\rho^\prime.
\end{equation}
In the geometrical framework, the above equation is equivalent to
\begin{equation}
\dot{p^\prime}=\frac 14\Delta p^\prime=-np^\prime,
\end{equation}
with $p^\prime$ being the expectation value function of $\rho^\prime$.
Therefore, the eigenvalue equation for the Laplace operator reads
\begin{equation}
\Delta f_k=-4nf_k
\end{equation}
for $k\neq 0$.
\end{proof}

\section{Weyl operators}

When we are interested in generalizing the two-dimensional case in a way that the unitarity of the noise operators is preserved, a natural choice is the following form of the generator,
\begin{equation}\label{L_Weyl}
\mathcal{L}(t)[\rho]=\frac 12\sum_{k_1,k_2=0}^{n-1}\gamma_{k_1k_2}(t)\Big(U_{k_1k_2}\rho U_{k_1k_2}^\dagger-\rho\Big).
\end{equation}
In the case of the master equation with the generator of evolution given by (\ref{L_Weyl}), the Kraus form of the corresponding channel is well known. Namely, this is the CPT map describing random unitary evolution,
\begin{equation}
\Lambda(t)[\rho]=\sum_{k_1,k_2=0}^n\pi_{k_1k_2}(t)U_{k_1k_2}\rho U_{k_1k_2}^\dagger.
\end{equation}
The newly-introduced $U_{k_1k_2}$ are the Weyl operators, which are defined by (c.f. \cite{Filip2})
\begin{equation}
U_{k_1k_2}:=\sum_{m=0}^{n-k_2-1}\Omega^{k_1m}E_{m,m+k_2},
\end{equation}
with the coefficient $\Omega=\exp\left(\frac{2\pi\imag}{n}\right)$, $k_1$, $k_2=0,\dots,n-1$, and $E_{ml}$ being the matrix with a $1$ in the $(m,l)$'th entry and $0$ elsewhere. 

Equation~(\ref{L_Weyl}) written in the language of probability distributions reads
\begin{equation}
\begin{split}
l(t)[p(t)]=\sum_{k_1,k_2=0}^{n-1}\Bigg( &\frac{\gamma_{k_1k_2}(t)+\gamma_{n-k_1,n-k_2}(t)}{32} |X_{u_{k_1k_2}}|^2p(t) \\&+\frac{\gamma_{k_1k_2}(t)-\gamma_{n-k_1,n-k_2}(t)}{32} \Imag X_{u_{k_1k_2}}\{u_{k_1k_2}^*,p(t)\}_+ \Bigg).
\end{split}
\end{equation}
The eigenvalues $l_{k_1k_2}(t)$ of (\ref{geom_Weyl}) to the eigenfunctions $u_{k_1k_2}$ equal
\begin{equation}
l_{k_1k_2}(t)=\frac 12 \sum_{j_1,j_2=0}^{n-1}\gamma_{k_1k_2}(t)\Bigg(\Real\Omega^{k_2j_1-k_1j_2}-1\Bigg).
\end{equation}
In the special case, where $\gamma_{k_1k_2}(t)=\gamma_{n-k_1,n-k_2}(t)$, the second component under the sum symbol vanishes, and therefore we have
\begin{equation}\label{geom_Weyl}
l(t)[p(t)]=\frac{1}{16}\sum_{k_1,k_2=0}^{n-1}\gamma_{k_1k_2}(t)|X_{u_{k_1k_2}}|^2p(t).
\end{equation}
Note that (\ref{geom_Weyl}) has real eigenvalues.

\section{Tensor products of the Pauli matrices}

It turns out that there is a way to keep the noise operators in higher dimensions both unitary and Hermitian. To make that possible, however, we need to limit our discussion to very specific dimensions of the Hilbert space. Let us write down the following generator,
\begin{equation}\label{L_Pauli}
\mathcal{L}(t)[\rho]=\frac 12\sum_{k_1,\dots,k_N=0}^{3}\gamma_{k_1\dots k_N}(t)\Big(\eta_{k_1\dots k_N}\rho\eta_{k_1\dots k_N}-\rho\Big),
\end{equation}
where the noise operators $\eta_{k_1\dots k_N}$ are tensor products of Pauli matrices $\sigma_{k_j}$,
\begin{equation}
\eta_{k_1\dots k_N}=\bigotimes_{j=1}^{N}\sigma_{k_j},
\end{equation}
and the dimension of the Hilbert space is $n=2^N$.
Similarly to the case of the Gell-Mann matrices, (\ref{L_Pauli}) reduces to a simpler formula with the double commutator structure,
\begin{equation}
\mathcal{L}(t)[\rho]=-\frac 14\sum_{k_1,\dots,k_N=0}^{3}\gamma_{k_1\dots k_N}(t)[\eta_{k_1\dots k_N},[\eta_{k_1\dots k_N},\rho]].
\end{equation}
Moving our interest to the geometrical picture, we arrive at the following -- equivalent -- form for the probability distributions,
\begin{equation}\label{Pauli_L}
l(t)[p(t)]=\frac{1}{16} \sum_{k_1,\dots,k_N=0}^{3}\gamma_{k_1\dots k_N}(t) X_{m_{k_1\dots k_N}}^2p(t)=\frac{1}{4}\Delta_{\gamma(t)} p(t),
\end{equation}
where $m_{k_1\dots k_N}=\<\eta_{k_1\dots k_N}\>$, and the generalized Laplacian $\Delta_{\gamma(t)}:=\frac{1}{4} \sum_{k_1,\dots,k_N=0}^{3}\gamma_{k_1\dots k_N}(t) X_{m_{k_1\dots k_N}}^2$.

Let us remind you that the CPT map satisfying (\ref{L_Pauli}) has the following Kraus representation,
\begin{equation}\label{K_Pauli}
\Lambda(t)[\rho]=\sum_{k_1,\dots,k_N=0}^{3}\pi_{k_1\dots k_N}(t)\eta_{k_1\dots k_N}\rho\eta_{k_1\dots k_N},
\end{equation}
which is equivalent to the following equation for $p(t)$,
\begin{equation}\label{Pauli_K}
p(t)=\Bigg(\frac 12 \Delta_{\pi(t)}+1\Bigg) p(0).
\end{equation}
As the r.h.s. of both (\ref{Pauli_L}) and (\ref{Pauli_K}) includes the Laplacian as the only differential operator, it is fairly easy to combine the two and get the formulas connecting $\pi_{k_1\dots k_N}(t)$ with $\gamma_{k_1\dots k_N}(t)$.
Indeed, substituting (\ref{Pauli_K}) into (\ref{Pauli_L}), from the master equation we obtain the following equations describing the dependence between $\gamma$'s and $\pi$'s,
\begin{equation}
\Bigg[\Delta_{\dot{\pi}(t)}-\frac 12\Delta_{\gamma(t)} \left(\frac 12\Delta_{\pi(t)}+1\right)\Bigg]m_{k_1\dots k_N}=0,
\end{equation}
with
{\medmuskip=1mu
\thinmuskip=1mu
\thickmuskip=1mu
\begin{equation}
\Delta_{\alpha(t)} m_{k_1\dots k_N}=2 \sum_{i_1,\dots,i_N=0}^3\alpha(m_{i_1,\dots,i_N})(t)\prod_{l=1}^N\Bigg[2\left(\delta_{i_lk_l}+\delta_{i_l0}+\delta_{k_l0}-2\delta_{i_l0} \delta_{k_l0}-\frac 12\right)-1\Bigg]m_{k_1\dots k_N}
\end{equation}}
for $\alpha\in\{\gamma,\pi,\dot{\pi}\}$.

\section{Projectors on MUBs}

First, let us recall the basic information on mutually unbiased bases (MUBs). Two orthonormal bases $\{e_k\}$ and $\{f_l\}$ are mutually unbiased iff for every choice of indices
\begin{equation}
|\<e_k|f_l\>|^2=\frac 1n.
\end{equation}
The above condition is satisfied by the eigenbases of the Pauli matrices. In fact, the generator of the Pauli channel can be rewritten in the following form,
\begin{equation}\label{L_gen_pauli}
\mathcal{L}(t)[\rho]=\sum_{\alpha=1}^{3}\gamma_\alpha(t)\Bigg[\sum_{k=0}^{1}P_k^{(\alpha)} \rho P_k^{(\alpha)}-\rho\Bigg],
\end{equation}
where $P_k^{(\alpha)}:=|\psi_k^{(\alpha)}\>\<\psi_k^{(\alpha)}|$ are the rank-1 projectors on the MUB vectors of $\sigma_\alpha$. If we restrict our interest to the Hilbert spaces with the maximal number of $n+1$ MUBs (i.e. $n=s^r$ with prime $s$), then this generator can be easily generalized to higher dimensions,
\begin{equation}
\mathcal{L}(t)[\rho]=\sum_{\alpha=1}^{n+1}\gamma_\alpha(t)\Bigg[\sum_{k=0}^{n-1}P_k^{(\alpha)} \rho P_k^{(\alpha)}-\rho\Bigg]=-\frac 12\sum_{\alpha=1}^{n+1}\gamma_\alpha(t)\sum_{k=0}^{n-1}\left[P_k^{(\alpha)},\left[P_k^{(\alpha)},\rho\right]\right].
\end{equation}
The corresponding CPT map,
\begin{equation}
\Lambda(t)[\rho]=\pi_0(t)\rho+\frac{1}{n-1}\sum_{\alpha=1}^{n+1}\pi_\alpha(t)\Bigg[n\sum_{k=0}^{n-1}P_k^{(\alpha)} \rho P_k^{(\alpha)}-\rho\Bigg],
\end{equation}
is called the generalized Pauli channel and had already been analyzed in \cite{Nath,moje2}. Geometrically, the generator of the evolution reads
\begin{equation}\label{MUB_gen}
l(t)[p(t)]=\frac 18\sum_{\alpha=1}^{n+1}\gamma_\alpha(t)\sum_{k=0}^{n-1}X_{q_k^{(\alpha)}}^2p(t),
\end{equation}
where $q_k^{(\alpha)}=\<P_k^{(\alpha)}\>$. For each $\alpha$, $\square_\alpha:=\frac 14\sum_{k=0}^{n-1}X_{q_\alpha,k}^2$ denotes a differential operator (for $n=2$ this is the angular momentum \cite{moje}), and their sum corresponds to the Laplacian.
For $n=3$, the generators (\ref{MUB_gen}) and (\ref{geom_Weyl}) are equivalent. The same is true for (\ref{MUB_gen}) and (\ref{Pauli_L}) when $n=2^r$.

\section{CP and P-divisibility}

Recall that an evolution is called Markovian iff the corresponding dynamical map $\Lambda(t)$ is CP-divisible. This property manifests itself in the GKSL form of the generator $\mathcal{L}(t)$ (\ref{generator_general}), where $\gamma_k(t)\geq 0$. Observe that $\mathcal{L}(t)$ has the GKSL form for every generalization that we analyzed. Therefore, each of these evolutions is Markovian iff the decoherence rates are non-negative at any given time. It is worth noting that for $n=2$ \cite{moje} the evolution is Markovian iff the generalized Laplacian $\Delta_{\gamma(t)}$ is elliptic.

Now, let us relax the Markovianity condition and focus on satisfying the weaker requirement (\ref{P-div}) instead. It turns out that the necessary conditions for the maps solving (\ref{L_GM}) with $\gamma_{k_1k_2}^S(t)+\gamma_{k_1k_2}^A(t)=\gamma(t)$, (\ref{L_Weyl}) with $\gamma_{k_1k_2}(t)=\gamma_{n-k_1,n-k_2}(t)$, (\ref{L_Pauli}), and (\ref{L_gen_pauli}) to be P-divisible are that the eigenvalues of the respective channels are non-positive. One obtains this result by taking for $X$ the corresponding eigenvectors, which are simply the noise operators in the first three cases and $Q_k^{(\alpha)}:=P_k^{(\alpha)}-P_{k-1}^{(\alpha)}$ in the last one. Interestingly, this is equivalent to the requirement that the generalized Laplacian $\Delta_{\gamma(t)}$ is a non-positively defined operator.

%The results for the CPT maps that are the solutions of 
%
%GM
%After solving the master equation for
%\begin{equation}
%p(t)=\frac 1n\left(\mathbb{I}+\sum_{k_1,k_2}p_{k_1k_2}(t)\tau_{k_1k_2}\right),
%\end{equation}
%one arrives at
%\begin{equation}
%p_{k_1k_2}(t)=p_{k_1k_2}(0)\exp\left[\int_0^tl_{k_1k_2}(s)\der s\right].
%\end{equation}
%Observe that the condition (\ref{P-div}) for $X=\tau_{k_1k_2}$ gives us the necessary conditions for the mapping to be P-divisible,
%\begin{equation}
%l_{k_1k_2}(t)\leq 0.
%\end{equation}
%This is equivalent to the requirement that $-\Delta_{\gamma(t)}$ is a positive operator.
%
%W
%If we parameterize the probability distribution in the following way:
%\begin{equation}
%p(t)=\frac 1n\left(\mathbb{I}+\sum_{k_1,k_2}p_{k_1k_2}(t)U_{k_1k_2}\right),
%\end{equation}
%then we obtain the solution of our master equation with the special condition $\gamma_{k_1k_2}(t)=\gamma_{n-k_1,n-k_2}(t)$ in the form of
%\begin{equation}
%p_{k_1k_2}(t)=p_{k_1k_2}(0)\exp\left[\int_0^tl_{k_1k_2}(s)\der s\right].
%\end{equation}

\section{Conclusions}

The following paper describes the decoherence of a qudit within the geometrical formulation of quantum mechanics. We show that the Markovian master equation for density operators is equivalent to the Fokker-Planck equation for quantum probability distributions.

It turns out that there are at least four natural generalizations of the time-local generator for the qubit case, which differ in the choice of the noise operators. We showed that whenever the geometrical representation of the generator $l(t)=\<L(t)\>$ has the following form,
\begin{equation}\label{delta}
l(t)[p(t)]\propto\Delta_{\gamma(t)}p(t),
\end{equation}
one can associate the Markovianity of the evolution with the properties of the generalized Laplacian. 
If the evolution is non-Markovian but the corresponding map is P-divisible, then $\Delta_{\gamma(t)}$ is a non-positive operator.
Moreover, there are certain clues pointing into the direction that the evolution generated by (\ref{delta}) is Markovian (CP-divisible) iff $\Delta_{\gamma(t)}$ is elliptic. Unfortunately, this is hard to check for $n>2$.

The next step would be to try and associate the differential operator form the purely quantum part (\ref{PQ}) with a known geometrical quantity, as we did here for the classical-quantum part (\ref{QC}). It would be interesting to analyze its properties and find some connections to the divisibility property of quantum channels.

\section*{Acknowledgements}

This work was supported by the National Science Center project 2015/17/B/ST2/02026. The author thanks prof. Dariusz Chru\'sci\'nski for valuable remarks.


\begin{thebibliography}{99}

\bibitem{RHP} \'A. Rivas, S.F. Huelga, and M.B. Plenio: {\em Entanglement and Non-Markovianity of Quantum Evolutions}, Phys. Rev. Lett. {\bf 105}, 050403 (2010).

\bibitem{Wolf-Isert} M. M. Wolf, J. Eisert, T. S. Cubitt, and J. I. Cirac: {\em Assessing Non-Markovian Quantum Dynamics}, Phys. Rev. Lett. \textbf{101},  150402 (2008).

\bibitem{GKS} V. Gorini, A. Kossakowski, and E.C.G. Sudarshan: {\it Completely positive semigroups of N-level systems}, J. Math. Phys.
{\bf 17}, 821 (1976).

\bibitem{L}  G. Lindblad: {\it On the generators of quantum dynamical semigroups}, Commun. Math. Phys. {\bf 48}, 119 (1976).

\bibitem{PRL-Sabrina}  D. Chru\'sci\'nski and S. Maniscalco: {\em Degree of Non-Markovianity of Quantum Evolution}, Phys. Rev. Lett, {\bf 112}, 120404 (2014).

\bibitem{Kibble} T.W.B. Kibble: {\em Geometrization of Quantum Mechanics}, Commun. Math. Phys. {\bf 65}, 189 (1979).

\bibitem{Cirelli} R. Cirelli, P. Lanzavecchia, and A. Mani\'{a}: {\em Normal pure states of the von Neumann algebra of bounded operators as K\"{a}hler manifold}, J. Phys. A: Math. Gen. {\bf 16}, 3829 (1983).

\bibitem{Cantoni} V. Cantoni: {\em The Riemannian Structure on the States
of Quantum-like Systems}, Commun. Math. Phys. {\bf 56}, 189 (1977).

\bibitem{Anandan} J. S. Anandan: {\em A Geometric approach to Quantum Mechanics}, Found. Phys. {\bf 21}, 1265 (1991).

\bibitem{Brody} D. C. Brody and L. P. Hughston: {\em Geometric Quantum Mechanics}, J. Geom. Phys. {\bf 38}, 19 (2001).

\bibitem{Ashtekar} A. Ashtekar and T.A. Schilling: {\em Geometrical Formulation of Quantum Mechanics} in {\em On Einstein's Path}, Springer-Verlag, New York 1999, pp. 23-65.

\bibitem{Moretti} V. Moretti and D. Pastorello: {\em Frame functions in finite-dimensional Quantum Mechanics and its Hamiltonian formulation on complex projective spaces}, Int. J. Geom. Methods Mod. Phys. {\bf 13}, 1650013 (2016).

\bibitem{Jamiolkowski} D. Chru{\'s}ci{\'n}ski and A. Jamio{\l}kowski: {\em Geometric Phases in Classical and Quantum Mechanics}, Birkh\"{a}user, Boston 2004.

\bibitem{Bengtsson} I. Bengtsson and K. \.{Z}yczkowski: {\em Geometry of Quantum States}, Cambridge University Press, Cambridge 2006.

\bibitem{MARMO} J. F. Cari\~nena, A. Ibort, G. Marmo, and G. Morandi:, {\em Geometry from Dynamics, Classical and Quantum}, Springer Netherlands, Dordrecht 2015.

\bibitem{Heydari} H. Heydari: {\em Geometric formulation of quantum mechanics}, arXiv:1503.00238.

\bibitem{Heydari-JPA} H. Heydari: {\em A geometric framework for mixed quantum states based on a Kähler structure}, J. Phys. A: Math. Theor. {\bf 48}, 255301 (2015).

\bibitem{GNS} D. Chru\'sci\'nski and G. Marmo: {\it Remarks on the GNS Representation and the Geometry of Quantum
States}, Open Systems and Inf. Dynamics, {\bf 16}, 157 (2009).

\bibitem{Risken} H. Risken: {\em The Fokker-Planck Equation: Methods of Solutions and Applications}, Springer-Verlag Berlin Heidelberg, Berlin 1996.

\bibitem{moje} K. Siudzi\'{n}ska and D. Chru\'{s}ci\'{n}ski: {\em Decoherence of a qubit as a diffusion on the Bloch sphere}, J. Phys. A: Math. Theor. {\bf 48}, 405202 (2015).

\bibitem{qudits} R.A. Bertlmann and P. Krammer: {\em Bloch vectors for qudits}, J. Phys. A: Math. Theor. {\bf 41}, 235303 (2008).

\bibitem{Filip2} D. Chru\'sci\'nski and F. A. Wudarski: {\em Non-Markovianity degree for random unitary evolution}, Phys. Rev. A {\bf 91}, 012104 (2015).

\bibitem{Nath} M. Nathanson and M.B Ruskai: {\em Pauli diagonal channels constant on axes}, J. Phys. A: Math. Theor. {\bf 40}, 8171 (2007).

\bibitem{moje2}  D. Chru\'{s}ci\'{n}ski and K. Siudzi\'{n}ska: {\em Generalized Pauli channels and a class of non-Markovian quantum evolution}, Phys. Rev. A {\bf 94}, 022118 (2016).

\end{thebibliography}
\end{document}